\documentclass[conference]{IEEEtran}
\IEEEoverridecommandlockouts
\usepackage{cite}
\usepackage{amsmath,amssymb,amsfonts}
\usepackage{algorithmic}
\usepackage{graphicx}
\usepackage{textcomp}
\usepackage{xcolor}
\usepackage[caption=false,font=footnotesize]{subfig}
\usepackage{amsthm}
\usepackage{pgfplots}
\pgfplotsset{compat=1.18}
\usetikzlibrary{tikzmark,arrows.meta,calc}
\newtheorem{Lemma}{Lemma}
\def\BibTeX{{\rm B\kern-.05em{\sc i\kern-.025em b}\kern-.08em
    T\kern-.1667em\lower.7ex\hbox{E}\kern-.125emX}}
\begin{document}

\title{Uncertainty-Aware Fusion for Resilient Distributed Radar Sensing\thanks{This work has been co-funded by the LOEWE
initiation (Hesse, Germany) within the
emergenCITY center
[LOEWE/1/12/519/03/05.001(0016)/72], and in part by the
German Federal Ministry for Research, Technology and Space (BMFTR) under
the program of “Souverän. Digital. Vernetzt.” joint project Open6GHub plus
(Project-ID 16KIS2407). }}

\author{Christian Eckrich, Maik Pfefferkorn, Rolf Findeisen,
Abdelhak M. Zoubir,
and Vahid Jamali
}

\maketitle

\begin{abstract}
Distributed radar sensing enables safe and resilient operation in mobile robotic and vehicular systems by combining multiple local views of the scene. In this paper, we consider a moving agent equipped with a frequency modulated continuous wave (FMCW) radar that fuses its local measurements with those of surrounding static radar sensors to reduce the uncertainty in target state estimation. The benefit of sharing measurements over capacity-limited wireless links come at the expense of two competing degradation mechanisms: quantization distortion, which increases the effective noise floor, and latency, which causes the received information to age. A unified framework is thus needed to quantify how individual radars contribute to uncertainty reduction under limited communication resources. To this end, we derive the Cram\'{e}r--Rao lower bound (CRLB) for the fused target state by combining local Fisher information matrices (FIMs) through coordinate transformations, incorporating distortion bounds from rate-distortion theory and information aging via a state transition model. We show that the resulting expression reveals a fundamental tradeoff between update rate and quantization fidelity governed by the available channel capacity. Our numerical simulations illustrate this tradeoff and demonstrate that a careful choice of system parameters (e.g., selected radar systems, quantization, update rate) is necessary for maximum uncertainty reduction.
\end{abstract}

\section{Introduction}
The operation of vehicles, drones and robots will increasingly rely on autonomous task-driven decision making and control under uncertainty \cite{Labib2021,Soliman2024,Fu2024}. In applications such as autonomous driving, drone swarms, and multi-robot systems in factories, agents operate and navigate in shared environments, where they have to maintain situational awareness of surrounding objects in order to assess potential collisions and other threats while performing their tasks \cite{Qamar2022}. 

To achieve this, the agents are usually equipped with local sensors, such as radar systems, cameras, or LiDARs, that provide onboard measurements related to their surroundings. However, it is not always guaranteed that the local measurements of an agent are sufficient to maintain situational awareness, e.g., due to occlusions, sensor limitations, or sensor failures, resulting in unacceptably high uncertainty in the state estimates of the object. 

In such cases, the agents may exploit the observations of surrounding static sensors to complement their local measurements and realize resilient situational awareness \cite{Moussa2024}. Retrieving the measurements of those static sensors requires communication resources, which are typically limited, especially in dynamic and unpredictable wireless environments \cite{Zou2023, Eckrich2024}. To cope with these communication constraints, the moving agent can select a subset of sensors which complement its local measurements the most. Quantization and increasing the update interval are used to further compress the data to be transmitted over the communication channel with limited capacity. Distortion noise due to quantization and increased age of information (AoI) due to larger update intervals degrade the quality of the shared measurements and thereby increase the uncertainty of the object state estimate at the agent. 
This paper provides a unified framework accounting for both quantization and latency as well as the value of individual static radars in the overall uncertainty reduction.
In particular, the contributions of this work are threefold:
\begin{itemize}
    \item A CRLB-based framework for fused object state estimation that captures the effects of quantization distortion and information aging due to latency.
    \item Analysis of the tradeoff between update rate and quantization fidelity under a channel capacity constraint.
    \item Numerical evaluation demonstrating the benefits of cooperative fusion, its sensitivity to quantization rate and latency and their underlying tradeoff.
\end{itemize}
In the subsequent sections, we will first derive the comprehensive CRLB expression for the fused object state estimation. Therefore, we formulate the local radar signal model in Section II, and then introduce the fused estimation model in Section III. Finally, we simplify and analyze the derived expression in Section IV, provide further evaluation numerically in Section V and conclude the paper in Section VI.

\section{Local Radar Signal Model}
Let us consider a moving agent $k = 0$ that is equipped with an FMCW radar sensor and observes a single point target\footnote{The analysis can be straightforwardly extended to multiple targets or the entire scene.}. The object parameters are collected in $\boldsymbol{\omega}_0 = [r_0,\theta_0,v_0]^\top$ and denote range, angle, and radial velocity of the object as seen by the moving agent, respectively. The measured echo signal is down-converted, sampled, and quantized to obtain the digital measurement cube, denoted as $\mathbf{Y}_0 \in \mathbb{C}^{N \times M \times L}$. Here, the dimensions of the data cube correspond to the fast-time, spatial, and slow-time dimensions of the radar measurement, respectively. For a single point target and a single propagation path, the data cube is modeled as
\begin{align}
    \label{eq:datacube}
    \mathbf{Y}_0
    &=
    \boldsymbol{M}_0(\boldsymbol{\omega}_0)
    + \mathbf{R}_0.
\end{align}
The $(n,m,l)$th entry of the signal component $\boldsymbol{M}_0(\boldsymbol{\omega}_0)\in \mathbb{C}^{N \times M \times L}$ within the data cube is given by
\begin{align}
    [\mathbf{M}_0(\boldsymbol{\omega}_0)]_{n,m,l} &= \sqrt{P_0}\,
    e^{-j(\Tilde{r}_0 n + \Tilde{\theta}_0 m + \Tilde{v}_0 l)}
\end{align}
with 
\begin{align}
    \Tilde{r}_0 = \frac{2B}{cN} r_0, \quad
    \Tilde{\theta}_0 = \frac{2\pi d}{\lambda M} \sin(\theta_0), \quad
    \Tilde{v}_0 = \frac{2T_c}{\lambda L} v_0,
\end{align}
where $B$ is the bandwidth, $c$ is the speed of light, $d$ is the antenna spacing, $\lambda$ is the wavelength, and $T_c$ is the chirp repetition interval. 
The term $\mathbf{R}_0\in \mathbb{C}^{N \times M \times L}$ denotes additive sensor noise, which is assumed to be zero mean complex Gaussian distributed with covariance matrix $\mathbf{\Sigma}_{R,0}$.
For the received power  model $P_0$, we aim to capture the key dependencies that relate to the object parameters, such as position ($\theta_0, r_0$) and radar cross section ($\sigma_\text{RCS}$), while summarizing the radar specific effects with a constant $\beta_0$. 
The received signal power is modeled using a simplified point-target radar equation 
\begin{align}
    \sqrt{P_0}
    =
    \frac{\beta_0 \cos^2(\theta_0)\sigma_\text{RCS}}{r_0^2}.
 \end{align}
Vectorizing the observation model from \eqref{eq:datacube}, the measurement model can be written as
\begin{align}
    \mathbf{y}_0
    =
    \boldsymbol{\mu}_0(\boldsymbol{\omega}_0)
    + \mathbf{r}_0
\end{align}
with $\{\mathbf{y}_0, \boldsymbol{\mu}_0(\boldsymbol{\omega}_0), \mathbf{r}_0\} = \mathrm{vec}(\{\mathbf{Y}_0, \mathbf{M}_0(\boldsymbol{\omega}_0), \mathbf{R}_0\})$, where $\mathrm{vec}(\cdot)$ denotes the vectorization operator.

\subsection{Information and Uncertainty Model}
The capability of the static agent to estimate the local parameter vector $\boldsymbol{\omega}_0$ from the measurement $\mathbf{y}_0$ is determined by the quality of the measurement, which is affected by the radar parameters and the object parameters. To evaluate the estimation performance, we derive the minimal achievable variance of any unbiased estimator of $\boldsymbol{\omega}_0$ based on the measurement $\mathbf{y}_0$. This is given by the CRLB, which is derived from the FIM about $\boldsymbol{\omega}_0$ \cite{scharf1991}, as
\begin{align}
    \text{CRLB}(\boldsymbol{\omega}_0) &= \mathbf{I}_0(\boldsymbol{\omega}_0)^{-1}.
\end{align}
In particular, the FIM for the local parameter vector for the proposed model is given by the following lemma.

\begin{Lemma}
\label{Lemma:FIM}
For the FMCW point-target model in \eqref{eq:datacube}, assume white complex Gaussian noise with covariance $\mathbf{\Sigma}_{R,0}=\sigma_{R,0}^2\mathbf{I}$. Then the FIM with respect to $\boldsymbol{\omega}_0=[r_0,\theta_0,v_0]^\top$ is diagonal, and its diagonal entries are approximately given by
\begin{align}
\label{eq:FIM_1}
[\mathbf{I}_0(\boldsymbol{\omega}_0)]_{1,1}
&= \frac{P_0 N M L}{6 \sigma_{R,0}^2}
\left(\frac{4 \pi S T_s}{c}\right)^2 (N^2-1), \\
\label{eq:FIM_2}
[\mathbf{I}_0(\boldsymbol{\omega}_0)]_{2,2}
&= \frac{P_0 N M L}{6 \sigma_{R,0}^2}
\left(\frac{2 \pi d \cos\theta_0}{\lambda}\right)^2 (M^2-1), \\
[\mathbf{I}_0(\boldsymbol{\omega}_0)]_{3,3}
\label{eq:FIM_3}
&= \frac{P_0 N M L}{6 \sigma_{R,0}^2}
\left(\frac{4 \pi T_c}{\lambda}\right)^2 (L^2-1),
\end{align}
where $S$, $T_s$, and $T_c$ are the chirp slope, the fast-time sampling period, and the chirp repetition period, respectively.
\end{Lemma}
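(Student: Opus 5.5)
The plan is to apply the Slepian--Bangs formula for a deterministic mean in circularly-symmetric complex Gaussian noise with known covariance $\sigma_{R,0}^2\mathbf{I}$:
\begin{align*}
[\mathbf{I}_0(\boldsymbol{\omega}_0)]_{i,j} = \frac{2}{\sigma_{R,0}^2}\,\mathrm{Re}\left\{ \frac{\partial \boldsymbol{\mu}_0^H}{\partial \omega_i}\frac{\partial \boldsymbol{\mu}_0}{\partial \omega_j}\right\}.
\end{align*}
To keep the phase derivatives tractable, I will write the $(n,m,l)$th phase as $\phi_{n,m,l}=a\,n+b\,m+c\,l$. For the range term, the derivative $\partial a/\partial r_0$ is identified with the beat-frequency factor $4\pi S T_s/c$, using $B=S N T_s$ up to the paper's $2\pi$ convention. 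For the angle and velocity terms, $\partial b/\partial \theta_0 \propto 2\pi d\cos\theta_0/\lambda$ and $\partial c/\partial v_0 \propto 4\pi T_c/\lambda$.

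\textbf{Main obstacle: the amplitude's dependence on the parameters.} Since $\sqrt{P_0}$ depends on $r_0$ and $\theta_0$, the derivative of $\mu_{n,m,l}=\sqrt{P_0}e^{-j\phi_{n,m,l}}$ contains an amplitude term $(\partial\sqrt{P_0}/\partial\omega_i)e^{-j\phi}$ alongside the phase term $-j\sqrt{P_0}(\partial\phi/\partial\omega_i)e^{-j\phi}$. I will argue this term is negligible, which is where the ``approximately'' in the statement comes from. Its contribution scales like $NML(\partial\sqrt{P_0}/\partial\omega_i)^2$, whereas the phase contribution scales like $P_0 NML\,N^2(\partial a/\partial\omega_i)^2$. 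The amplitude varies on the scale of $r_0$ itself, while the phase varies on the scale of the range resolution, so the ratio is of order $(\text{resolution}/r_0)^2/N^2\ll 1$. An alternative route is to treat the amplitude as a separate nuisance parameter. That is cleaner, and the cross terms vanish once the indices are centered.

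\textbf{Diagonal entries.} With the amplitude term dropped, $[\mathbf{I}_0]_{i,j}=\frac{2P_0}{\sigma^2}\sum_{n,m,l}\frac{\partial\phi}{\partial\omega_i}\frac{\partial\phi}{\partial\omega_j}$. For $i=j=1$ this reduces to $(\partial a/\partial r_0)^2\,ML\sum_n n^2$. Here I will use a centered index convention, taking the array and time reference at the phase centre, so $n$ runs symmetrically about zero. Then $\sum_n n^2 = N(N^2-1)/12$, which yields $\frac{P_0NML}{6\sigma^2}(\cdot)^2(N^2-1)$ as stated. The factor $2/12=1/6$ matches, which confirms that centering is the intended convention. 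The angle and velocity entries follow identically, with $M$ and $L$ in place of $N$.

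\textbf{Off-diagonal entries.} Each cross term factors, e.g.\ for $(1,2)$ it is proportional to $L\,(\sum_n n)(\sum_m m)$. Under centering, $\sum_n n=\sum_m m=\sum_l l=0$, so all off-diagonal entries vanish and the FIM is diagonal.

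I expect the real subtlety to be justifying the centering, since with uncentered indices $0,\dots,N-1$ the cross terms are nonzero. I would handle it by noting that shifting the index origin only adds a known constant phase. Absorbing that phase into an unknown carrier phase, a nuisance parameter whose elimination projects out the mean, gives exactly the centered sums, i.e.\ the Schur complement. Together with the amplitude approximation, this justifies the diagonal form up to the stated approximation.
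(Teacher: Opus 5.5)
Your proposal is correct and follows essentially the same route as the paper: the complex-Gaussian (Slepian--Bangs) FIM, centered indices so the mixed sums vanish, the identity $\sum_q q^2 = Q(Q^2-1)/12$, the chain rule to the physical parameters, and dropping the amplitude's parameter dependence (the paper frames this as neglecting a Schur-complement nuisance contribution). Your carrier-phase nuisance argument is a useful justification of the centering, which the paper simply assumes. One small slip in your heuristic: since $\partial a/\partial r_0 = 4\pi S T_s/c = 2\pi/(N\Delta R)$ with $\Delta R = c/(2B)$, the amplitude-to-phase ratio for the range entry is about $\frac{12}{\pi^2}(\Delta R/r_0)^2$, not $(\Delta R/r_0)^2/N^2$; this still makes the term negligible whenever $r_0 \gg \Delta R$.
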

\begin{proof}
Starting from the vectorized observation model, the FIM is obtained from the standard complex Gaussian expression \cite{scharf1991}
\vspace*{-0.4cm}
\begin{align}
    \label{eq:FIM_def}
    \mathbf{I}_0(\boldsymbol{\omega}_0)
    =
    2\,\mathrm{Re}\!\left\{
    \left(\frac{\partial \boldsymbol{\mu}_0}{\partial \boldsymbol{\omega}_0}\right)^H
    \mathbf{\Sigma}_{R,0}^{-1}
    \left(\frac{\partial \boldsymbol{\mu}_0}{\partial \boldsymbol{\omega}_0}\right)
    \right\}.
\end{align}
Evaluating the derivatives of $\boldsymbol{\mu}_0$ with respect to the normalized range, angle, and Doppler frequencies yields terms proportional to the fast-time, spatial, and slow-time indices, respectively. Under the assumption of white noise, i.e., $\mathbf{\Sigma}_{R,0}=\sigma_{R,0}^2\mathbf{I}$, and for centered indices $n, m, l$, the mixed sums vanish, so that the FIM becomes diagonal. Using the identity $\sum_{q=-(Q-1)/2}^{(Q-1)/2} q^2 = \frac{Q(Q^2-1)}{12}$,
closed-form expressions for the diagonal entries are obtained in the normalized parameter domain. Finally, applying the Jacobian of the transformation from the normalized frequencies to the physical parameters $(r_0,\theta_0,v_0)$ yields the stated expressions.

If the object reflectivity is modeled as an unknown nuisance parameter, its effect can be incorporated using the Schur complement. In the approximation adopted here, this contribution is neglected, since the dominant information is carried by the phase terms of the FMCW signal.
\end{proof}

Note that due to the diagonal structure of the FIM the estimation of the range, angle, and radial velocity are approximately decoupled. Moreover, the local estimation accuracy is inherently limited by the scene and radar parameters, as shown in \eqref{eq:FIM_1}-\eqref{eq:FIM_3}. The estimation uncertainty can be reduced by the fusion of multiple measurements, provided by the surrounding distributed static radar systems, which however introduces coupling among different estimated parameters in the aggregated FIM, as will be discussed in the following section.

\section{Ideal Cooperative Fusion of Distributed Radar Measurements}
To complement the local measurement of the agent $k=0$ with the measurements of a set of $K$ surrounding static radar systems, we will first introduce a common reference frame for the object parameters, and then derive the fused measurement model and the resulting CRLB for the global object state estimate. 
\subsection{Global Coordinate System}
While the local parameter vector $\boldsymbol{\omega}_0$ provides information about the object's state in the local coordinate system of the moving agent, the global parameter vector $\mathbf{\Omega} = \left[x,y,v_x,v_y\right]^T$ represents the object's state in a common global coordinate system, where $x$ and $y$ denote the position and $v_x$ and $v_y$ denote the velocity components of the object. 

Representing the measurement model with respect to the object state in a common global coordinate system requires the transformation $g_k: \mathbf{\Omega} \mapsto \boldsymbol{\omega}_k$.
It is a nonlinear set of functions that depend on the relative positions and orientations of the different radar systems relative to the object and is given by 
\begin{align}
    \boldsymbol{\omega}_k = g_k(\mathbf{\Omega}) = 
    \begin{bmatrix}
    \sqrt{(x-x_k)^2 + (y-y_k)^2} \\
    \arctan\left(\frac{y-y_k}{x-x_k}\right) - \phi_k \\
    \frac{(x-x_k)(v_x-v_{x,k}) + (y-y_k)(v_y-v_{y,k})}{\sqrt{(x-x_k)^2 + (y-y_k)^2}}
    \end{bmatrix},
\end{align}
where $(x_k,y_k)$ and $(v_{x,k},v_{y,k})$ denote the position and velocity of radar $k$, respectively, and $\phi_k$ denotes the orientation of radar $k$ relative to the global coordinate system.

\subsection{Fused Measurement Model}
The global measurement model from the moving agent $k=0$ and the $K$ supporting distributed radar systems can be expressed as 
\begin{align}
    \mathbf{y}(\boldsymbol{\Omega}) &= \begin{bmatrix}
    \boldsymbol{\mu}_0(g_0(\mathbf{\Omega})) \\
    \boldsymbol{\mu}_1(g_1(\mathbf{\Omega})) \\
    \vdots \\
    \boldsymbol{\mu}_K(g_K(\mathbf{\Omega}))
    \end{bmatrix} + 
    \begin{bmatrix}
    \mathbf{r}_0 \\
    \mathbf{r}_1 + \mathbf{d}_1 \\
    \vdots \\
    \mathbf{r}_K + \mathbf{d}_K
    \end{bmatrix}.
\end{align}
The additional noise terms $\mathbf{d}_k$ for $k=1,\dots,K$ capture the distortion induced by the quantization of the measurements of the supporting radar systems, which are shared with the moving agent. The distortion noise is assumed to be zero mean complex Gaussian distributed with covariance matrix $\mathbf{\Sigma}_{D,k}$, and uncorrelated with the sensor noise $\mathbf{r}_k \sim \mathcal{CN}(0, \mathbf{\Sigma}_{R,k})$. In the ideal setting, with unlimited communication resources, the distortion noise $\mathbf{d}_k$ can be neglected.

\begin{figure*}[!t]
\vspace{0cm}% vertical space for above-equation annotations
\begin{align}
    \label{eq:CRLB_tradeoff}
    \text{CRLB}(\mathbf{\Omega}_{t+\tau}) = \left(
    \underbrace{\tikzmarknode{nodeJ0a}{\mathbf{J}}_0^T \mathbf{I}\!\left(g_0\!\left(\boldsymbol{\Omega}_{t+\tau}\right)\right) \tikzmarknode{nodeJ0b}{\mathbf{J}}_0}_{\text{Local information}}
    + \underbrace{\sum_{k=1}^K \left(
    \mathbf{F}_\tau\,\left(
    \tikzmarknode{nodeJ}{\mathbf{J}}_k^T \mathbf{I}_k\!\left(g_k\!\left(\mathbf{\Omega}_t\right),\,
    \sigma_{R,k}^2 + \tikzmarknode{nodeDistNoise}{\sigma}_{D,k}^2
    \right)\tikzmarknode{nodeJ2}{\mathbf{J}}_k\right)^{-1} \mathbf{F}_\tau^T
    + \tikzmarknode{nodeQ}{\mathbf{Q}}_\tau
    \right)^{-1}}_{\text{Distributed radar information}} \right)^{-1}
\end{align}
% Compile twice for tikzmark positions to resolve correctly
\begin{tikzpicture}[remember picture, overlay, >=Stealth, shorten >=1mm,
    ann/.style={font=\footnotesize, align=center, inner sep=2pt}]
    % === "Impact of different sensor perspective": bar spanning all four J nodes ===
    \coordinate (midAllJ) at ($(nodeJ0a.north)!0.2!(nodeJ2.north)$);
    \coordinate (barY)    at ([yshift=0.5cm]midAllJ);
    \coordinate (forkL)   at (nodeJ0a.north |- barY);
    \coordinate (forkJ0b) at (nodeJ0b.north |- barY);
    \coordinate (forkJ)   at (nodeJ.north   |- barY);
    \coordinate (forkR)   at (nodeJ2.north  |- barY);
    \draw[shorten >=0mm] (forkL) -- (forkR);               % horizontal bar spanning all J nodes
    \draw[->] (forkL)   -- (nodeJ0a.north); % drop to J_0^T
    \draw[->] (forkJ0b) -- (nodeJ0b.north); % drop to J_0
    \draw[->] (forkJ)   -- (nodeJ.north);   % drop to J_k^T
    \draw[->] (forkR)   -- (nodeJ2.north);  % drop to J_k
    \node[ann, above] (labJ) at (barY) {Impact of different\\sensor perspective};
    % === "Impact of quantization" – shifted left to avoid overlap with AoI ===
    \node[ann, above] (labNoise)
        at ([xshift=0.0cm, yshift=0.75cm]nodeDistNoise.north)
        {Impact of\\quantization};
    \draw[->, dashed] (labNoise.south) -- (nodeDistNoise.north);
    % === "Age of Information" – shifted right to avoid overlap with quantization ===
    \node[ann, above] (labQ)
        at ([xshift=-0.4cm, yshift=0.65cm]nodeQ.north)
        {Impact of\\ Age of Information};
    \draw[->, dashed] ([xshift=0.4cm, yshift=0.0cm]labQ.south) -- (nodeQ.north);
\end{tikzpicture}
\hrulefill
\end{figure*}

\subsection{Global Information and Uncertainty}
In the same way as the object parameters are transformed from the local to the global parameter space, the uncertainty of the object estimates is transformed and propagated through the fusion process. The FIM about the global parameter vector $\mathbf{\Omega}$ can be derived from the FIMs about the local parameter vectors $\boldsymbol{\omega}_k$ by applying the chain rule for parameter transformation. The corresponding CRLB for the global parameter vector $\mathbf{\Omega}$ can then be obtained by inverting the global FIM, as given by Lemma \ref{Lemma:FusedCRLB}.
\begin{Lemma}
    \label{Lemma:FusedCRLB}
    Assuming the noise components are independent across agents, the CRLB for the global parameter vector $\mathbf{\Omega}$ can be expressed as the inverse of the sum of the projected FIMs from all agents as
    \begin{align}
        \text{CRLB}(\mathbf{\Omega}) = \left( \sum_{k=0}^K \mathbf{J}_k^T \mathbf{I}_k(g_k(\boldsymbol{\Omega})) \mathbf{J}_k \right)^{-1},
    \end{align}
    where $\mathbf{J}_k = \frac{\partial \boldsymbol{\omega}_k}{\partial \mathbf{\Omega}}$ is the Jacobian of the transformation from the global to the local parameter space, evaluated at the true parameter value, and $\mathbf{I}_k(\cdot)$ is given in \eqref{eq:FIM_def}.
\end{Lemma}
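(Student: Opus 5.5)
The plan has three ingredients: independence across agents, the Gaussian form of the FIM, and the chain rule for reparametrization.

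\textbf{Step 1: the joint log-likelihood separates.} In the fused measurement model the stacked noise vector $[\mathbf{r}_0^T, (\mathbf{r}_1+\mathbf{d}_1)^T,\dots,(\mathbf{r}_K+\mathbf{d}_K)^T]^T$ is zero-mean complex Gaussian. Its covariance is block diagonal because the noise components are assumed independent across agents. The $k$th block is $\mathbf{\Sigma}_{R,k}$, plus $\mathbf{\Sigma}_{D,k}$ for $k\ge 1$ (in the ideal setting $\mathbf{\Sigma}_{D,k}=\mathbf{0}$). Hence the joint likelihood of $\mathbf{y}$ given $\mathbf{\Omega}$ factorizes into $\prod_{k=0}^K p_k(\mathbf{y}_k\mid \mathbf{\Omega})$. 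Each factor depends on $\mathbf{\Omega}$ only through $\boldsymbol{\omega}_k=g_k(\mathbf{\Omega})$. The log-likelihood is therefore a sum, and the FIM with respect to $\mathbf{\Omega}$ is the sum of the per-agent FIMs with respect to $\mathbf{\Omega}$. Equivalently, apply \eqref{eq:FIM_def} to the stacked model: the block-diagonal inverse covariance turns the quadratic form into $\sum_k$ of per-block terms.

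\textbf{Step 2: reparametrize each block.} By the chain rule,
\begin{align}
\frac{\partial \boldsymbol{\mu}_k(g_k(\mathbf{\Omega}))}{\partial \mathbf{\Omega}} = \frac{\partial \boldsymbol{\mu}_k}{\partial \boldsymbol{\omega}_k}\,\mathbf{J}_k, \qquad \mathbf{J}_k=\frac{\partial \boldsymbol{\omega}_k}{\partial \mathbf{\Omega}} .
\end{align}
Since $g_k$ is real-valued, $\mathbf{J}_k$ is a real $3\times 4$ matrix, so it can be pulled out of the $\mathrm{Re}\{\cdot\}$. Substituting into \eqref{eq:FIM_def} gives
\begin{align}
2\,\mathrm{Re}\Big\{\mathbf{J}_k^T\Big(\frac{\partial \boldsymbol{\mu}_k}{\partial \boldsymbol{\omega}_k}\Big)^H\mathbf{\Sigma}_k^{-1}\frac{\partial \boldsymbol{\mu}_k}{\partial \boldsymbol{\omega}_k}\mathbf{J}_k\Big\}=\mathbf{J}_k^T\mathbf{I}_k(g_k(\mathbf{\Omega}))\mathbf{J}_k .
\end{align}
All derivatives are evaluated at the true $\mathbf{\Omega}$. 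Summing over $k$ yields the global FIM $\sum_{k=0}^K \mathbf{J}_k^T \mathbf{I}_k \mathbf{J}_k$.

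\textbf{Step 3: invert.} The CRLB is the inverse of the global FIM, which gives the stated expression.

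\textbf{Main obstacle: invertibility.} A single $\mathbf{J}_k^T\mathbf{I}_k\mathbf{J}_k$ has rank at most $3<4$, so it is singular on its own. The sum is invertible only if the agents' perspectives are sufficiently diverse, i.e.\ the stacked Jacobians $[\mathbf{J}_0^T,\dots,\mathbf{J}_K^T]^T$ have full column rank $4$ (each $\mathbf{I}_k$ being positive definite by Lemma~\ref{Lemma:FIM}). I would state this as an identifiability assumption, for example at least two non-collocated radars. Otherwise the bound should be read with a pseudo-inverse, or restricted to the identifiable subspace.

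A secondary point is the power term: $P_k$ also depends on $\boldsymbol{\omega}_k$. The chain-rule argument remains valid as long as $\mathbf{I}_k$ denotes the exact FIM from \eqref{eq:FIM_def}. The approximate diagonal form from Lemma~\ref{Lemma:FIM} is only substituted afterwards.
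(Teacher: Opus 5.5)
Your proposal is correct and follows the same route as the paper. Independence across agents makes the log-likelihood additive, so the FIMs add, and the chain rule for reparametrization turns each local FIM into $\mathbf{J}_k^T\mathbf{I}_k(g_k(\mathbf{\Omega}))\mathbf{J}_k$; the paper only invokes that chain rule, whereas you spell it out via the Gaussian mean derivative and the realness of $\mathbf{J}_k$.

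Your invertibility caveat (full column rank of the stacked Jacobians) is a valid point the paper leaves implicit. However, ``two non-collocated radars'' is not quite sufficient as an example: if both radars are collinear with the target, their lines of sight are parallel, and the velocity component transverse to that line stays unobservable.
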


\begin{proof}
    Under the independence assumption, the individual measurements from the $K$ agents are conditionally independent given the global object state $\mathbf{\Omega}$. Consequently, the joint log-likelihood function is the sum of the marginal log-likelihoods of the individual agents. Hence, the global FIM is the sum of the individual FIMs. 
    Furthermore, the local parameter vector $\boldsymbol{\omega}_k$ is a deterministic function of the global state $\mathbf{\Omega}$. Using the chain rule for Fisher information under parameter transformation, the FIM of agent $k$ with respect to $\mathbf{\Omega}$ is derived by projecting its local FIM through the Jacobian matrix $\mathbf{J}_k = \frac{\partial \boldsymbol{\omega}_k}{\partial \mathbf{\Omega}}$, resulting in the quadratic form $\mathbf{J}_k^T \mathbf{I}(\boldsymbol{\omega}_k) \mathbf{J}_k$. Substituting this and $\boldsymbol{\omega}_k = g_k(\mathbf{\Omega})$ into the sum of the individual FIMs yields the stated result.
\end{proof}

\section{Age of Information and Distortion in Cooperative Fusion}

Cooperative fusion requires the agents to exchange their local measurements over capacity-limited wireless links. Let $C_{0,k}$ denote the channel capacity between radar $k$ and the moving agent, and let $a_k \in [0,1]$ be the fraction of resources allocated to radar $k$ under a multiple access scheme. 
The resulting rate constraint requires $a_k C_{0,k} \geq \frac{B_k}{\tau}$, where $B_k$ is the number of bits encoding the measurement of radar $k$ and $\tau$ is the update period. To satisfy this constraint, either $B_k$ can be reduced through quantization or $\tau$ can be increased, both of which degrade the shared measurement quality and reduce estimation performance. Note that $\tau$ sets the bit budget describing an already-acquired frame, not the intrinsic measurement resolution, which is fixed by the waveform.

\subsection{Quantization}

To quantify the distortion introduced by quantization, we use rate-distortion theory for Gaussian sources. Since the object parameters are unknown and no specific prior distribution is assumed, the signal $\mathbf{y}_k$ can be conservatively approximated as a zero mean complex Gaussian vector from a source coding perspective. The sensor noise $\mathbf{r}_k$ is also Gaussian, so that the overall source to be encoded is a Gaussian mixture. 
Rate-distortion theory \cite{Cover2001} then provides an upper bound on the distortion at any given rate.
With identical per-sample power across the fast-time, spatial, and slow-time dimensions, the covariance matrices therefore reduce to $\mathbf{\Sigma}_{Y,k} = \sigma_{Y,k}^2 \mathbf{I}$, $\mathbf{\Sigma}_{R,k} = \sigma_{R,k}^2 \mathbf{I}$, and $\mathbf{\Sigma}_{D,k} = \sigma_{D,k}^2 \mathbf{I}$. The $NML$ scalar components of $\mathbf{y}_k$ can thus be encoded independently. By the Gaussian rate-distortion function \cite{Cover2001}, encoding each component at $b_k = B_k/(NML)$ bits yields the following minimum achievable distortion per sample:
\begin{align}
    \label{eq:distortion}
    b_k = \log_2 \left(1+ \frac{\sigma_{Y,k}^2 + \sigma_{R,k}^2}{\sigma_{D,k}^2}\right) \Rightarrow \sigma_{D,k}^2 = \frac{\sigma_{Y,k}^2 + \sigma_{R,k}^2}{2^{b_k} - 1},
\end{align}
where $B_k$ is the total number of bits allocated to the measurement of radar~$k$. Since the distortion noise is independent, the noise covariance $\sigma^2\mathbf{I}$ in Lemma~\ref{Lemma:FIM} is replaced by $(\sigma_{R,k}^2 + \sigma_{D,k}^2)\mathbf{I}$.

\subsection{Latency and Information Aging}
Due to the update period $\tau$, the received measurement from radar $k$ reflects a past object state rather than the current one. Let $\mathbf{\Omega}_t$ denote the object state at the time the measurement was taken and $\mathbf{\Omega}_{t+\tau}$ the current state. We relate the two through a linear temporal evolution model
\begin{align}
    \mathbf{\Omega}_{t+\tau} = \mathbf{F}_\tau\, \mathbf{\Omega}_t + \mathbf{w}_\tau,\label{eq:StateTransModel}
\end{align}
where $\mathbf{F}_\tau$ is the state transition matrix and $\mathbf{w}_\tau$ is zero-mean process noise with covariance $\mathbf{Q}_\tau = \mathbb{E}[\mathbf{w}_\tau \mathbf{w}_\tau^T]$, uncorrelated with $\mathbf{\Omega}_t$. 

\begin{Lemma}
    Given the FIM $\mathbf{I}_k(\mathbf{\Omega}_t)$ obtained from the delayed measurement, the FIM about the current state $\mathbf{\Omega}_{t+\tau}$, defined according to \eqref{eq:StateTransModel}, is given by
    \begin{align}
        \mathbf{I}_k(\mathbf{\Omega}_{t+\tau}) = \left(\mathbf{F}_\tau\, \mathbf{I}_k(\mathbf{\Omega}_t)^{-1} \mathbf{F}_\tau^T + \mathbf{Q}_\tau\right)^{-1}.
    \end{align}
\end{Lemma}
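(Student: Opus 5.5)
The plan is to read the FIM $\mathbf{I}_k(\mathbf{\Omega}_t)$ as the inverse covariance of an efficient (Gaussian) estimate of $\mathbf{\Omega}_t$. We then push this covariance through the linear dynamics \eqref{eq:StateTransModel} and invert again, which is the Kalman prediction step written in information form. We will do this inside a Gaussian linear model so that the identity is exact there, and read it as the CRLB statement in the general case.

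\textbf{Step 1: equivalent Gaussian observation of $\mathbf{\Omega}_t$.} By Lemma~\ref{Lemma:FusedCRLB} and the local efficiency of the ML estimator, the information radar $k$ carries about $\mathbf{\Omega}_t$ is summarized by a sufficient statistic
\begin{align}
  \hat{\mathbf{\Omega}}_{t} = \mathbf{\Omega}_t + \mathbf{e}_k, \qquad \mathbf{e}_k \sim \mathcal{N}\big(\mathbf{0}, \mathbf{I}_k(\mathbf{\Omega}_t)^{-1}\big).
\end{align}
This is exact for the linearized Gaussian model obtained by expanding $\boldsymbol{\mu}_k(g_k(\cdot))$ to first order around the true value. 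For that model the FIM of the raw data $\mathbf{y}_k$ and the FIM of $\hat{\mathbf{\Omega}}_t$ coincide. We assume $\mathbf{I}_k(\mathbf{\Omega}_t)$ is invertible, which requires identifiability of the full global state from the measurement.

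\textbf{Step 2: reparametrize by the current state.} Substituting $\mathbf{\Omega}_t = \mathbf{F}_\tau^{-1}(\mathbf{\Omega}_{t+\tau}-\mathbf{w}_\tau)$ would require $\mathbf{F}_\tau$ to be invertible. This holds for the constant-velocity model, where $\mathbf{F}_\tau$ is unit upper-triangular. A cleaner route is to predict forward directly:
\begin{align}
  \mathbf{F}_\tau\hat{\mathbf{\Omega}}_t = \mathbf{\Omega}_{t+\tau} + \mathbf{F}_\tau\mathbf{e}_k - \mathbf{w}_\tau .
\end{align}
Since $\mathbf{w}_\tau$ is zero-mean, independent of $\mathbf{\Omega}_t$ and of the measurement noise, the effective noise has covariance $\mathbf{F}_\tau\mathbf{I}_k(\mathbf{\Omega}_t)^{-1}\mathbf{F}_\tau^T + \mathbf{Q}_\tau$. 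The Gaussian FIM formula \eqref{eq:FIM_def} for a linear mean $\mathbf{\Omega}_{t+\tau}$ with identity Jacobian then gives exactly the inverse of this covariance, which is the claim.

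\textbf{Main obstacle.} The delicate point is conceptual. $\mathbf{w}_\tau$ is random, so $\mathbf{\Omega}_{t+\tau}$ is not a deterministic parameter of the likelihood of $\mathbf{y}_k$. The statement must therefore be interpreted in one of two ways. The first treats $\mathbf{w}_\tau$ as additional nuisance noise marginalized in the likelihood. The second is as a Bayesian/filtering information bound. Under the second reading, it is the standard posterior-CRLB prediction step of Tichavsk\'y et al., obtained from the joint information of $(\mathbf{\Omega}_t,\mathbf{\Omega}_{t+\tau})$ and a Schur complement eliminating $\mathbf{\Omega}_t$. I would adopt the first reading and state it explicitly. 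I would also remark that the Schur-complement calculation with the Woodbury identity,
\begin{align}
  \mathbf{Q}_\tau^{-1} - \mathbf{Q}_\tau^{-1}\mathbf{F}_\tau\big(\mathbf{I}_k(\mathbf{\Omega}_t) + \mathbf{F}_\tau^T\mathbf{Q}_\tau^{-1}\mathbf{F}_\tau\big)^{-1}\mathbf{F}_\tau^T\mathbf{Q}_\tau^{-1},
\end{align}
reproduces the same expression when $\mathbf{Q}_\tau \succ 0$. This confirms consistency of the two viewpoints.
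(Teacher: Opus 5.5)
Your proposal is correct and takes essentially the same route as the paper. The paper just propagates $\mathbf{P}_{t+\tau}=\mathbf{F}_\tau\mathbf{P}_t\mathbf{F}_\tau^T+\mathbf{Q}_\tau$ with $\mathbf{P}_t=\mathbf{I}_k(\mathbf{\Omega}_t)^{-1}$ and inverts; your equivalent Gaussian statistic $\hat{\mathbf{\Omega}}_t$ is a more careful justification for identifying the inverse FIM with that covariance. Your discussion of how to read an FIM about the random $\mathbf{\Omega}_{t+\tau}$, and the Schur-complement/Woodbury form, are useful additions the paper lacks: that form stays well defined for $\mathbf{Q}_\tau\succ 0$ and invertible $\mathbf{F}_\tau$, whereas the single-radar $4\times 4$ FIM $\mathbf{J}_k^T\mathbf{I}_k\mathbf{J}_k$ has rank at most three and cannot be inverted as the statement literally requires.
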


\begin{proof}
    Under the linear Gaussian model in \eqref{eq:StateTransModel}, the covariance of the current state propagates as $\mathbf{P}_{t+\tau} = \mathbf{F}_\tau\, \mathbf{P}_t\, \mathbf{F}_\tau^T + \mathbf{Q}_\tau$. Substituting $\mathbf{P}_t = \mathbf{I}_k(\mathbf{\Omega}_t)^{-1}$ and inverting yields the result.
\end{proof}
The formulation of the state transition model $\mathbf{F}_\tau$ determines how the uncertainty in the past states propagates to the current state over the update period $\tau$. For an identity state transition model $\mathbf{F}_\tau = \mathbf{I}$, the uncertainties for the individual object parameters grow independently and linearly with $\tau$ due to the process noise $\mathbf{Q}_\tau$.
In contrast, for a constant-velocity model, the state transition model $\mathbf{F}_\tau$ correlates the position and velocity components of the object state to predict the future state. In that case, the uncertainty in the velocity components also contributes to the uncertainty in the position components, which causes the position uncertainty to grow \textit{cubically} with~$\tau$.

\subsection{Latency and Distortion Tradeoff}

Combining the effects of quantization and latency, the resulting CRLB for the global parameter vector $\mathbf{\Omega}_{t+\tau}$ is given by \eqref{eq:CRLB_tradeoff} at the top of this page, where the first part of the expression denotes the moving agent's own contribution and
the later part captures the contribution of the supporting radar systems. The notation $\mathbf{I}_k(\cdot\,;\, \sigma_{R,k}^2 + \sigma_{D,k}^2)$ denotes the local FIM evaluated with the combined sensor and distortion noise power $\sigma_{R,k}^2 + \sigma_{D,k}^2$ in place of solely $\sigma_{R,k}^2$, since the distortion $\mathbf{d}_k$ adds to the sensor noise $\mathbf{r}_k$.
%Despite the complex expression form due to the nonlinear variable transformations and the repeated matrix inversions, the CRLB in \eqref{eq:CRLB_tradeoff} provides a systematic way to gain insights into the tradeoff between latency and distortion in cooperative fusion. 
The CRLB in \eqref{eq:CRLB_tradeoff} provides a systematic way to gain insights into the tradeoff between latency and distortion.

In the following we will introduce some simplifications to further analyze the structure of the tradeoff. 
\textit{1)} We assume being interested only in the marginalized CRLB for the position components of the object state. Following from that, we approximate the latency induced uncertainty by the position variance, which grows cubically with $\tau$ as argued in the previous section. This cubic growth is approximated by $\tau^3 Q$, where $Q$ is the process noise intensity.
\textit{2)} In an attempt to simplify the nonlinear geometric relationships that cause the addition of structurally different FIMs from different radar systems, we assume that the information contribution from different perspectives are captured by the scalar $\Bar{I}_k$.
\textit{3)} We consider only the local information $\Bar{I}_0$ and one additional radar providing $\Bar{I}_1$. Substituting \eqref{eq:distortion} and applying the rate constraint $B_1 = \tau C_{0,1}$ yields
\vspace*{-0.4cm}
\begin{align}
    \label{eq:simplified}
    \text{CRLB}(\tau) \approx \biggl( \frac{\Bar{I}_0}{\sigma_{R,0}^2} + \biggl( \frac{\sigma_{R,1}^2 + \overbrace{\frac{\sigma_{Y,1}^2+\sigma_{R,1}^2}{2^{\tau C_{0,1}}-1}}^{\text{Distortion}}}{\Bar{I}_1} + \overbrace{\tau^3 Q}^{\text{AoI}}\biggr)^{-1} \biggr)^{-1}.
\end{align}
Here, it becomes apparent that for small update periods $\tau$, the CRLB is dominated by the distortion noise, while in the regime of large update periods, the CRLB is dominated by the latency-induced uncertainty. The optimal update period minimizing the CRLB can be found by numerically optimizing the expression.
\vspace*{-0.1cm}
\section{Simulation Results}
In this section, we present a numerical evaluation of the derived CRLB for a scenario depicted in Fig.~\ref{fig:posonly_setup}. The agent is moving along the trajectory shown in blue, while the object of interest is located at the origin. The agent is supported by two static radar systems, depicted in orange. The radar of the agent as well as the static radar systems operate with the same parameters: $N=1048$, $M=8$, $L=32$, $B=1\,\text{GHz}$, $T_c=50\,\mu\text{s}$, $\text{Transmit power} = 20\,\text{dBm}$, and $\text{Noise Figure} = 12\,\text{dB}$. The object has an RCS of $\sigma_\text{RCS} = 0.5\,\text{m}^2$. As shown in Fig.~\ref{fig:posonly_inclusion}, the CRLB is evaluated over time as the agent moves along the trajectory depicted in Fig.~\ref{fig:posonly_setup}. The estimation performance first improves as the agent approaches the object, but rapidly degrades once the object moves out of its field of view, at which point the supporting radar systems become the sole source of information. While the agent-only CRLB becomes unbounded at around 12\,$s$, including one or both static radars maintains a reliable fallback performance. 

In Fig.~\ref{fig:posonly_distortion}, we evaluate the effect of quantization distortion on the performance of the fused estimation case. The CRLB increases when fewer bits are allocated to the quantization of the measurements of the supporting radar systems. It can also be observed that the estimation performance gain decreases with the number of bits, which is not only a consequence of the logarithmic relationship between the distortion noise power and the number of bits, but also a consequence of the diminishing returns of the information gain from the supporting radar systems as the local measurement becomes relatively more informative. 

In Fig.~\ref{fig:posonly_update}, we evaluate the effect of latency on the performance of the fused estimation case. The CRLB increases within the update period $\tau$ due to the aging of the information. The plot shows that the degradation starts cubically and then approaches an upper limit posed by the CRLB from the local measurement of the agent. With decreasing update period the CRLB approaches the ideal case with no latency. 

The tradeoff between distortion noise and latency-induced uncertainty is illustrated in Fig.~\ref{fig:crlb_tradeoff}, where the normalized CRLB is plotted as a function of $\tau$ as given by the simplified expression from \eqref{eq:simplified}. The plot shows that the tradeoff leads to an optimal update period that minimizes the CRLB. However, finding the optimum in the general case with the full expression from \eqref{eq:CRLB_tradeoff} is highly complex and needs further investigation.

% Auto-generated by Simulations/position_only/passby_position_only.py
\begin{figure}[t]
\centering
\begin{minipage}[b]{0.5\columnwidth}
  \centering
  \hspace*{16pt}{\tiny\pgfplotslegendfromname{inclLegend}}\\[4pt]
  \subfloat[Scenario geometry]{\label{fig:posonly_setup}%
  \hspace*{-10pt}\begin{tikzpicture}
  \begin{axis}[
      width=1.2\columnwidth,
      xlabel={$x$~[m]}, ylabel={$y$~[m]},
      label style={font=\scriptsize},
      tick label style={font=\scriptsize},
      xlabel shift=-5pt, ylabel shift=-8pt,
      axis equal image, grid=both,
      grid style={line width=0.1pt, draw=gray!30},
      xmin=-70, xmax=50,
      ymin=-30, ymax=35,
  ]
  \draw[blue, dashed, thick, ->, >=stealth] (axis cs:-60,-20) -- (axis cs:40,-20);
  \fill[blue, opacity=0.1] (axis cs:-40,-20) .. controls (axis cs:-25,-5) and (axis cs:-15,-12) .. (axis cs:-15,-20) .. controls (axis cs:-15,-28) and (axis cs:-25,-35) .. cycle;
  \draw[blue, opacity=0.5] (axis cs:-40,-20) .. controls (axis cs:-25,-5) and (axis cs:-15,-12) .. (axis cs:-15,-20) .. controls (axis cs:-15,-28) and (axis cs:-25,-35) .. cycle;
  \addplot[only marks, mark=square*, mark size=3pt, blue] coordinates {(-40,-20)};
  \node[anchor=south, font=\tiny, blue] at (axis cs:-40,-17) {Agent};
  \addplot[only marks, mark=triangle*, mark size=3pt, orange] coordinates {(-35,25)};
  \node[anchor=south west, font=\tiny, orange] at (axis cs:-35,25) {Radar 1};
  \addplot[only marks, mark=triangle*, mark size=3pt, orange] coordinates {(-40,10)};
  \node[anchor=south west, font=\tiny, orange] at (axis cs:-40,10) {Radar 2};
  \addplot[only marks, mark=star, mark size=4pt, red] coordinates {(0,0)};
  \node[anchor=south, font=\tiny, red] at (axis cs:0,3) {Object};
  \end{axis}
  \end{tikzpicture}%
  }
\end{minipage}%
\hfill
\subfloat[Position CRLB]{\label{fig:posonly_inclusion}%
\begin{tikzpicture}
    \hspace{-0pt}
\begin{semilogyaxis}[
    width=0.55\columnwidth, height=0.5\columnwidth,
    xlabel={Time~[s]},
    ylabel={$\sqrt{\mathrm{tr}(\mathbf{C}_{\mathrm{pos}})}$~[m]},
    label style={font=\scriptsize},
    tick label style={font=\scriptsize},
    xlabel shift=-5pt, ylabel shift=-10pt,
    ylabel style={xshift=-8pt},
    grid=both,
    grid style={line width=0.1pt, draw=gray!30},
    legend to name=inclLegend,
    legend style={font=\tiny, legend columns=2},
    unbounded coords=jump,
]
\addplot[gray, dotted, very thick]
    table[x index=0, y index=1, col sep=comma, skip first n=1]
    {figures/data/posonly_inclusion.csv};
\addlegendentry{Agent only}

\addplot[orange, dashed, very thick]
    table[x index=0, y index=2, col sep=comma, skip first n=1]
    {figures/data/posonly_inclusion.csv};
\addlegendentry{Agent + 1 Radar}

\addplot[black, dashed, very thick]
    table[x index=0, y index=3, col sep=comma, skip first n=1]
    {figures/data/posonly_inclusion.csv};
\addlegendentry{Agent + 2 Radars}

\end{semilogyaxis}
\end{tikzpicture}%
}%
\caption{Scenario geometry (a) and position CRLB for different radar configurations (b).}
\label{fig:posonly_setup_inclusion}
\end{figure}

% Auto-generated by Simulations/position_only/passby_position_only.py
\begin{figure}[t]
\centering
\begin{tikzpicture}
\begin{semilogyaxis}[
    width=0.9\columnwidth,
    height=0.55\columnwidth,
    xlabel={Time~[s]},
    ylabel={$\sqrt{\mathrm{tr}(\mathbf{C}_{\mathrm{pos}})}$~[m]},
    label style={font=\scriptsize},
    tick label style={font=\scriptsize},
    xlabel shift=-3pt, ylabel shift=-5pt,
    grid=both,
    grid style={line width=0.1pt, draw=gray!30},
    legend pos=north east,
    legend style={font=\tiny, at={(1.03,1.03)}, anchor=north east},
    unbounded coords=jump,
    ymax=100.0,
]
\addplot[black, dashed, very thick]
    table[x index=0, y index=1, col sep=comma, skip first n=1]
    {figures/data/posonly_distortion.csv};
\addlegendentry{$B=\infty$\,kB}

\addplot[color={rgb,1:red,0.652;green,0.806;blue,0.894}, solid, very thick]
    table[x index=0, y index=2, col sep=comma, skip first n=1]
    {figures/data/posonly_distortion.csv};
\addlegendentry{$B=13.4$\,kB}

\addplot[color={rgb,1:red,0.256;green,0.570;blue,0.775}, solid, very thick]
    table[x index=0, y index=3, col sep=comma, skip first n=1]
    {figures/data/posonly_distortion.csv};
\addlegendentry{$B=6.71$\,kB}

\addplot[color={rgb,1:red,0.031;green,0.290;blue,0.570}, solid, very thick]
    table[x index=0, y index=4, col sep=comma, skip first n=1]
    {figures/data/posonly_distortion.csv};
\addlegendentry{$B=3.35$\,kB}

\addplot[gray, dotted, very thick]
    table[x index=0, y index=5, col sep=comma, skip first n=1]
    {figures/data/posonly_distortion.csv};
\addlegendentry{Agent only}

\end{semilogyaxis}
\end{tikzpicture}
\caption{Position CRLB for varying quantization rates $B$ [kB].}
\label{fig:posonly_distortion}
\end{figure}

% Auto-generated by Simulations/position_only/passby_position_only.py
\begin{figure}[t]
\centering
\begin{tikzpicture}
\begin{semilogyaxis}[
    width=0.9\columnwidth,
    height=0.55\columnwidth,
    xlabel={Time~[s]},
    ylabel={$\sqrt{\mathrm{tr}(\mathbf{C}_{\mathrm{pos}})}$~[m]},
    label style={font=\scriptsize},
    tick label style={font=\scriptsize},
    xlabel shift=-3pt, ylabel shift=-5pt,
    grid=both,
    grid style={line width=0.1pt, draw=gray!30},
    legend pos=north east,
    legend style={font=\tiny, at={(1.03,1.03)}, anchor=north east},
    unbounded coords=jump,
    ymax=100.0,
]
\addplot[black, dashed, very thick]
    table[x index=0, y index=1, col sep=comma, skip first n=1]
    {figures/data/posonly_update.csv};
\addlegendentry{$T_u=0$\,s}

\addplot[color={rgb,1:red,0.988;green,0.732;blue,0.630}, solid, very thick]
    table[x index=0, y index=2, col sep=comma, skip first n=1]
    {figures/data/posonly_update.csv};
\addlegendentry{$T_u=1$\,s}

\addplot[color={rgb,1:red,0.947;green,0.268;blue,0.196}, solid, very thick]
    table[x index=0, y index=3, col sep=comma, skip first n=1]
    {figures/data/posonly_update.csv};
\addlegendentry{$T_u=2.8$\,s}

\addplot[color={rgb,1:red,0.495;green,0.022;blue,0.063}, solid, very thick]
    table[x index=0, y index=4, col sep=comma, skip first n=1]
    {figures/data/posonly_update.csv};
\addlegendentry{$T_u=8$\,s}

\addplot[gray, dotted, very thick]
    table[x index=0, y index=5, col sep=comma, skip first n=1]
    {figures/data/posonly_update.csv};
\addlegendentry{Agent only}

\end{semilogyaxis}
\end{tikzpicture}
\caption{Position CRLB for varying update periods $T_u$ [s].}
\label{fig:posonly_update}
\end{figure}

% CRLB tradeoff decomposition plot
\definecolor{darkblue}{rgb}{0.15,0.35,0.62}
\definecolor{darkred}{rgb}{0.65,0.10,0.15}
\begin{figure}[t]
\centering
\begin{tikzpicture}
\begin{axis}[
    width=0.9\columnwidth,
    height=0.45\columnwidth,
    xlabel={Update period $\tau$~[s]},
    ylabel={Normalized CRLB},
    grid=both,
    grid style={line width=0.1pt, draw=gray!30},
    label style={font=\scriptsize},
    tick label style={font=\tiny},
    legend style={font=\tiny, at={(1.03,0.6)}, anchor=north east, draw=gray!50, inner sep=1pt, row sep=-2pt},
    xmin=0.05, xmax=2.0,
]
\addplot[darkblue, solid, very thick]
    table[x=tau, y=crlb_distortion_only, col sep=comma]
    {figures/data/crlb_decomposition.csv};
\addlegendentry{Distortion only}

\addplot[darkred, solid, ultra thick]
    table[x=tau, y=crlb_aging_only, col sep=comma]
    {figures/data/crlb_decomposition.csv};
\addlegendentry{Aging only}

\addplot[black, dashed, ultra thick]
    table[x=tau, y=crlb_combined, col sep=comma]
    {figures/data/crlb_decomposition.csv};
\addlegendentry{Combined}

\addplot[red, dotted, thick] coordinates {(0.60491, 0) (0.60491, 3)};

\addplot[red, mark=star, mark size=5pt, only marks, very thick] coordinates {(0.60491, 1.0)};
\node[font=\tiny, anchor=south west] at (axis cs:0.60491-0.1, 1.2) {\color{red}Optimal tradeoff};

\end{axis}
\end{tikzpicture}
\caption{Normalized CRLB vs.\ update period $\tau$ along the rate constraint $B_k = \tau C_{0,k}$, decomposed into distortion-only and aging-only contributions alongside the combined curve.}
\label{fig:crlb_tradeoff}
\end{figure}

\section{Conclusion and Outlook}
In this paper, we have proposed a framework to analyze the performance of cooperative distributed radar sensing. We have derived the CRLB and investigated the influence of communication constraints, particularly latency and quantization, on the estimation performance. The results provide insights into the tradeoff between update period and bit allocation and can be used to design cooperative fusion strategies, subject to the assumptions of a single point target without clutter.

Future work will leverage the framework to design online sensor-selection and resource-allocation policies, providing a feasible set of operating points for a closed-loop control algorithm. In the longer term, the framework can serve as the uncertainty quantification component of a joint sensing-control problem, in which the agent's sensing, communication, and motion decisions are co-optimized subject to chance constraints, connecting naturally to the dual control problem, where inputs simultaneously regulate the system and probe its state.
\vspace*{-0.5cm}
\bibliographystyle{IEEEtran}
\bibliography{IEEEabrv,reference}

\end{document}